\newtheorem{theorem}{\bfseries Theorem}[section]
\theoremstyle{definition}  
\newtheorem{proposition}[theorem]{\bfseries Proposition}
\newtheorem{corollary}[theorem]{\bfseries Corollary}
\newtheorem{definition}[theorem]{\bfseries Definition}
\renewcommand{\qed}{\hspace*{\fill}\rule{1 ex}{1.5 ex}\\}
\renewenvironment{proof}{\begin{trivlist}\item[]{\bfseries Proof.}}{\qed\end{trivlist}}
\def\land{\wedge}       \def\lor{\vee}
\def\oprod{\otimes}
\def\del{\triangledown}
\def\TITLE{Categorial subsystem independence as morphism co-possibility\footnote{Forthcoming in the special issue of Communications in Mathematical Physics devoted to Rudolf Haag}}
\def\AUTHOR{Zal\'an Gyenis\thanks{Budapest University of Technology and Economics, Department of Algebra and Department of Logic, E\"otv\"os Lor\'and University; gyz@renyi.hu} \and Mikl\'os R\'edei\thanks{Department of Philosophy, Logic and Scientific Method, London School of Economics and Political Science, Houghton Street, London WC2A 2AE, UK, m.redei@lse.ac.uk}}
\def\DATE{\today}
\def\ABSTRACT{This paper formulates a notion of independence of subobjects of an object in a general (i.e. not necessarily concrete) category. Subobject independence is the categorial generalization of what is known as subsystem independence in the context of algebraic relativistic quantum field theory. The content of subobject independence formulated in this paper is morphism co-possibility: two subobjects of an object will be defined to be independent if any two morphisms on the two subobjects of an object are jointly implementable by a single morphism on the larger object. The paper investigates features of subobject independence in general, and subobject independence in the category of \C algebras with respect to operations (completely positive unit preserving linear maps on $C^{\ast}$-algebras) as morphisms is suggested as a natural subsystem independence axiom to express relativistic locality of the covariant functor in the categorial approach to quantum field theory.}
\def\KEYWORDS{Algebraic relativistic quantum field theory; Category theory; Subsystem independence}
\newcommand{\vek}[2][r]{%
  \gdef\@VORNE{1}
  \left[\hskip-\arraycolsep%
    \begin{array}{#1}\vekSp@lten{#2}\end{array}%
  \hskip-\arraycolsep\right]}
\def\vekSp@lten#1{\xvekSp@lten#1;vekL@stLine;}
\def\vekL@stLine{vekL@stLine}
\def\xvekSp@lten#1;{\def\temp{#1}%
  \ifx\temp\vekL@stLine
  \else
    \ifnum\@VORNE=1\gdef\@VORNE{0}
    \else\@arraycr\fi%
    #1%
    \expandafter\xvekSp@lten
  \fi}
\def\endef{\hfill$\square$}
\def\C{${C}^{\ast}$ }
\DeclareMathOperator{\cl}{cl}
\def\2{\mathbf{2}}
\def\bC{\mathbf{C}}
\DeclareMathOperator{\Hom}{Hom}
\def\cA{{\cal A}}
\def\cB{{\cal B}}
\def\cC{{\cal C}}
\def\cF{{\cal F}}
\def\cP{{\cal P}}
\def\cR{{\cal R}}
\def\es{\wedge}
\def\C{$C^{\ast}$-}
\def\W{$W^{\ast}$-}
\def\mfMan{\mathfrak{Man}}
\def\mfAlg{\mathfrak{Alg}}
\begin{document}

    \title{\vspace*{-1cm}\TITLE}
    \author{\AUTHOR} \date{\DATE}
    \maketitle
    \thispagestyle{empty}

    \begin{abstract}
        \ABSTRACT
        \vspace{5mm}

        \noindent {\bf Keywords: \KEYWORDS.}
    \end{abstract}
    \vspace{5mm}  \normalsize


\section{Motivation}\label{sec:motiv}

Subsystem independence is a crucial notion in the specific axiomatic approach to (relativistic) quantum field theory known as ``Local Quantum Physics'' (also called ``Algebraic Quantum Field Theory''). This approach to quantum field theory was initiated by Haag and Kastler \cite{Haag-Kastler1964}, and since its inception it has developed into a rich field. (For monographic summaries see \cite{Horuzhy1990}, \cite{Haag1992}, \cite{Araki1999}; for compact, more recent reviews we refer to \cite{Buchholz2001}, \cite{Buchholz-Haag2000}, \cite{Summers2012perspective}; the papers \cite{Fredenhagen2010}, \cite{Haag2010a}, \cite{Haag2010b} recall some episodes in the history of this approach.) The key element in the approach is the implementation of locality, and ``The locality concept is abstractly encoded in a notion of independence of subsystems$\ldots$''  \cite{Brunetti-Fredenhagen2006}. It turns out that independence of subsystems of a larger system can be specified in a number of nonequivalent ways: Summers' 1990 paper \cite{Summers1990b} gives a review of the rich hierarchy of independence notions; for a non-technical review of subsystem independence concepts that include more recent developments as well see \cite{Summers2009}.

Local Quantum Physics has recently been further developed into what can properly be called `Categorial Local Quantum Physics' \cite{Brunetti-Fredenhagen-Verch2003}: in this new ``paradigm'' quantum field theory is a covariant functor from the category of certain spacetimes with isometric, smooth, causal embeddings of spacetimes as morphisms into the category of \C algebras with injective \C algebra homomorphisms as morphisms. (For a self-contained review of this approach see \cite{Fredenhagen-Reijzner2016}). This categorial approach is motivated by the desire to establish a generally covariant quantum field theory on a general, non-flat spacetime that might not have any non-trivial global symmetry. Lack of global spacetime symmetry makes it impossible to postulate covariance of quantum fields in the usual way by requiring observables to transform covariantly with respect to representations of the global symmetry group of the spacetime. Instead, general covariance is implemented in Categorial Local Quantum Physics very naturally by postulating that the functor representing quantum field theory is a \emph{covariant} functor. Locality also has to be implemented in categorial quantum field theory. This is done by formulating axioms for the covariant functor that express independence of subsystems. In the original paper formulating Categorial Local Quantum Physics   \cite{Brunetti-Fredenhagen-Verch2003} Einstein Locality (local commutativity) is taken as the expression of locality as independence. In subsequent publications \cite{Brunetti-Fredenhagen2009}, \cite{Brunetti-Fredenhagen-Paniz-Reijzner2014} a categorial version of the split property is added to the Einstein Locality axiom. It is then shown in \cite{Brunetti-Fredenhagen-Paniz-Reijzner2014} that (under the further assumption of weak additivity) the categorial split property is equivalent to the functor being extendable to a tensor functor between the tensor category formed by spacetimes with respect to disjoint union as tensor operation and the tensor category of \C algebras taken with the minimal tensor products of \C algebras.

In what sense are Einstein Locality and Einstein Locality together with the categorial split property (i.e. the tensor property) of the functor subsystem independence conditions? This is a non-trivial question, which is shown by the remark of Buchholz and Summers on Einstein Locality:
\begin{quote}
``This postulate, often called the condition of locality \cite{Haag1992}, has become one of the basic ingredients in both the construction and the analysis of relativistic theories \cite{Weinberg1995}. Yet, in spite of its central role in the theoretical framework, the question of whether locality can be deduced from physically meaningful properties of the physical states has been open for more than four decades.'' \cite{Buchholz-Summers2005}
\end{quote}
In particular, Buchholz and Summers point out \cite{Buchholz-Summers2005} that some of the standard notions of subsystem independence (such as \C independence) do \emph{not} entail Einstein Locality. And conversely: Einstein Locality alone does not entail \C independence. Nor does Einstein Locality, in and by itself, entail the independence condition known as prohibition of superluminal signaling: local commutativity of \C algebras pertaining to spacelike separated spacetime regions only entails prohibition of superluminal signaling with respect to measurements of local observables representable by the projection postulate and by the operations given by local Kraus operators -- but not with respect to general operations that do not have a local Kraus representation (this was shown in \cite{Redei-Valente2010}). Einstein Locality alone also does not ensure another subsystem independence called operational \C independence: That any two (non-selective) operations (completely positive, unit preserving linear maps on \C algebras) performed on spacelike separated subsystems of a larger system are jointly implementable as a single operation on the larger system \cite{Redei-Summers2009}, \cite{Redei2010FoundPhys}. Since the tensor product of two operations is again an operation (\cite{Blackadar2005}[p. 190], see also Proposition 9. in \cite{Redei-Summers2009}), the tensorial property of the functor does entail operational \C independence of the components of the tensor product within the tensor product algebra; however, the tensor property entails more than this: it entails operational \C independence \emph{in the product sense} \cite{Redei-Summers2009}, which is a strictly stronger condition than simple operational \C independence. But for the purposes of expressing locality in quantum field theory subsystem independence in the sense of operational \C independence does not have to be implemented in the strong form of requiring existence of a joint \emph{product} extension of operations on subsystems.

Thus requiring only Einstein Locality of the functor seems too weak, imposing the categorial split property (i.e. demanding the functor to be tensorial) seems to demand a bit more than needed to implement locality interpreted as subsystem independence in Categorial Local Quantum Physics. What is then the right concept of subsystem independence that expresses locality in Categorial Local Quantum Physics?

The subsystem independence hierarchy in Local Quantum Physics suggests a general concept of subsystem independence that has a natural formulation in terms of categories objects of which are sets with morphisms as maps: independence as morphism co-possibility. According to this independence concept two objects are independent in a larger object with respect to a class of morphisms if any two morphisms on the two smaller objects have a joint extension to a morphism on the larger object. This independence notion appeared in \cite{Redei2014SHPMP} and was suggested in \cite{Redei2016CatLocNagoya} as a possible axiom to require in Categorial Local Quantum Physics. Taking specific subclasses of the operations  as the class of morphisms, one can recover the standard concepts of subsystem independence in the independence hierarchy as special cases of morphism co-possibility (see \cite{Redei2014SHPMP}).

The way subsystem independence as morphism co-possibility was formulated above and in the papers \cite{Redei2014SHPMP} and \cite{Redei2016CatLocNagoya} is not entirely satisfactory however because it is not purely categorial: in a general category objects are not necessarily sets and morphisms are not necessarily functions -- not every category is a concrete category (e.g. the real numbers $\mathbb{R}$ regarded as a poset category) \cite{Awodey2010}. In a general category subsystem independence as morphism co-possibility should be formulated as \emph{subobject} independence with respect to some class of morphisms. The aim of the present paper is to define subobject independence in this way, as morphism co-possibility in a general category, and to investigate the basic properties of such an independence notion. This notion is of interest in its own right and, after defining it in section \ref{sec:subobj-indep}, we give several examples of this sort of independence in different categories in section \ref{sec:examples}. Section \ref{sec:tensor} proves some propositions on the relation of subobject independence and tensor structure in a category. In section \ref{sec:opind} subobject independence is specified in the context of the category of \C algebras taken with the class of operations between \C algebras. The resulting notion of operational independence is suggested then in section \ref{sec:functor-of-QFT} as a possible axiom to express relativistic locality of the covariant functor describing a generally covariant quantum field theory.

\section{Categorial independence of subobjects\label{sec:subobj-indep}}

\def\Ob{\mathsf{Ob}}
\def\Mor{\mathsf{Mor}}
\def\MorH{\mathsf{H}}
\def\Iso{\mathsf{Iso}}
\def\Aut{\mathsf{Aut}}

\def\Hom{\mathsf{Hom}}

In this section $\bC = (\Ob, \Mor)$ denotes a general category, and $\Hom$ is a subclass of $\Mor$ such that $(\Ob, \Hom)$ also is a category.
It is not assumed that $\bC$ is a concrete category; i.e. that it is categorically equivalent to a category objects of which are sets and monomorphisms are functions. Morphisms in $\Hom$ will be referred to as $\Hom$-morphisms, morphisms in $\Mor$ will be called $\Mor$-morphisms. We wish to define a notion of independence of subobjects $A,B$ of an object $C$, where the concept of subobject is understood with respect to $\Hom$-morphisms, and the independence expresses that any two $\Mor$-morphisms on the $\Hom$-subobjects $A$ and $B$ are jointly implementable by a single $\Mor$-morphism on $C$. The two morphism classes $\Hom$ and $\Mor$ should be considered as variables in this categorial concept of independence: Choosing different morphism classes one obtains independence notions contents of which can vary considerably.

Recall that a $\Mor$-morphism $f: A\to B$ is a monomorphism (``mono", for short) if for any object $C\in\Ob$ and any $\Mor$-morphisms
$g_1,g_2: C\to A$ it holds that $g_1f=g_2f$ implies $g_1=g_2$. Monomorphisms are the categorial equivalents of injective functions.

The notion of $\Hom$-subobject is formulated in terms of $\Hom$-monomorphisms: a $\Hom$-subobject of an object $X$ is an equivalence class of
$\Hom$-monomorphisms $\Hom\ni i_A:A\to X$ where $i_A$ is defined to be equivalent to $i_B:B\to X$ if there is an $\Hom$-isomorphism $h:A\to B$ such that $hi_B = i_A$ and $h^{-1}i_A=i_B$. In what follows, $|i_A|_{\Hom}$ denotes the equivalence class of
$\Hom$-morphisms equivalent to $i_A$.

\begin{definition}[$\Mor$-independence of $\Mor$-morphisms]\label{def:independence-morphisms}
	$\Mor$-morphisms
	\begin{tikzcd}A \arrow[->]{r}{f_A} & X \arrow[<-]{r}{f_B} & B \end{tikzcd}
	are called $\Mor$-independent if for any two $\Mor$-morphisms
	\begin{tikzcd}A \arrow{r}{\alpha_A} & A\end{tikzcd},
	\begin{tikzcd}B \arrow{r}{\alpha_B} & B\end{tikzcd} there is $\Mor$-morphism
	\begin{tikzcd}X \arrow{r}{\alpha} & X\end{tikzcd}
	such that the diagram below commutes.
	\begin{center}
		\begin{tikzcd}[column sep=large,row sep=large]
			A \arrow[->]{r}{f_A} \arrow{d}{\alpha_A} & X\arrow[dotted]{d}{\alpha}\arrow[<-]{r}{f_B} & B\arrow{d}{\alpha_B} \\
			A \arrow[->]{r}{f_A} & X\arrow[<-]{r}{f_B} & B
		\end{tikzcd}
	\end{center}
	\endef	
\end{definition}

We are now in the position to give the definition of the central concept of this paper:

\begin{definition}[$\Mor$-independence of $\Hom$-subobjects]\label{def:subobj-indep}
	Two $\Hom$-subobjects of object $X$ represented by the two equivalence classes
	$|f_1|_{\Hom}$ and $|f_2|_{\Hom}$ are called $\Mor$-independent if
	any two $\Hom$-monomorphisms
	\[\begin{tikzcd}D_1 \arrow[>->]{r}{g_1} & X\arrow[<-<]{r}{g_2} & D_2 \end{tikzcd} \]
	$g_1\in |f_1|_{\Hom}$ and $g_2\in |f_2|_{\Hom}$ are $\Mor$-independent.
	\endef
\end{definition}

The content of $\Mor$-independence of $\Hom$-subobjects is that two $\Hom$-subobjects of object $C$ are $\Mor$-independent if and only if \emph{any two} $\Mor$-morphisms on \emph{any} representations of the $\Hom$-subobjects are jointly implementable by a \emph{single} $\Mor$-morphism on $C$. This independence concept expresses the independence of the of the substructures that are invariant with respect to $\Hom$ from the perspective of the structural properties embodied in the $\Mor$-morphisms. (See examples in section \ref{sec:examples}.)

The next proposition is useful when it comes to determine whether two subobjects are independent. By definition, independence of $\Hom$-subobjects implies $\Mor$-independence of any of their representatives. The following proposition states
the converse: if one pair of representatives of two $\Hom$-subobjects are
$\Mor$-independent, then the two $\Hom$-subobjects are $\Mor$-independent.

\begin{proposition}\label{prop:one-morphism-enough} 
	If for $\Hom$-monomorphisms
	\begin{tikzcd}C_1 \arrow[>->]{r}{f_1} & X \arrow[<-<]{r}{f_2} & C_2 \end{tikzcd}
	and
	\begin{tikzcd}D_1 \arrow[>->]{r}{g_1} & X \arrow[<-<]{r}{g_2} & D_2 \end{tikzcd}
	we have $|f_1|_{\Hom}=|g_1|_{\Hom}$ and $|f_2|_{\Hom}=|g_2|_{\Hom}$
	(i.e. the $\Hom$-monomorpisms $f_i$ and $g_i$ ($i=1,2$) represent the
	same $\Hom$-subobject),
	then $f_1$ and $f_2$ are $\Mor$-independent if and only if $g_1$ and $g_2$
	are $\Mor$-independent.
\end{proposition}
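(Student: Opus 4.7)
The plan is to prove the equivalence by symmetry: it suffices to show that if $f_1,f_2$ are $\Mor$-independent, then so are $g_1,g_2$. The idea is straightforward conjugation by the $\Hom$-isomorphisms implementing the equivalence $|f_i|_\Hom = |g_i|_\Hom$.

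First, unpacking the hypothesis: for each $i\in\{1,2\}$ there exists a $\Hom$-isomorphism $h_i : C_i \to D_i$ such that $h_i g_i = f_i$ (and consequently $g_i = h_i^{-1} f_i$). Since by assumption $\Hom \subseteq \Mor$, both $h_i$ and $h_i^{-1}$ are $\Mor$-morphisms, so conjugation by $h_i$ takes $\Mor$-morphisms $D_i\to D_i$ to $\Mor$-morphisms $C_i\to C_i$.

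To verify $\Mor$-independence of $g_1,g_2$, let arbitrary $\Mor$-morphisms $\beta_i : D_i \to D_i$ be given. Define $\alpha_i := h_i \beta_i h_i^{-1} : C_i \to C_i$, which lies in $\Mor$ by the previous remark. Since $f_1,f_2$ are assumed $\Mor$-independent, there is a $\Mor$-morphism $\alpha : X \to X$ with $\alpha_i f_i = f_i \alpha$ for $i=1,2$ (in the diagrammatic composition convention of the paper). I claim the same $\alpha$ witnesses joint implementability for $\beta_1,\beta_2$, i.e. $\beta_i g_i = g_i \alpha$. The verification is a short chain of equalities using $g_i = h_i^{-1} f_i$ and the definition of $\alpha_i$:
\[
g_i \alpha \;=\; h_i^{-1} f_i \alpha \;=\; h_i^{-1} \alpha_i f_i \;=\; h_i^{-1} (h_i \beta_i h_i^{-1}) f_i \;=\; \beta_i h_i^{-1} f_i \;=\; \beta_i g_i.
\]
Swapping the roles of $f_i$ and $g_i$ (using $h_i^{-1}$ in place of $h_i$) gives the reverse direction.

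There is no real obstacle here; the only thing to keep straight is the composition convention used in Definition~\ref{def:independence-morphisms} and the fact that $\Hom$-morphisms are automatically available as $\Mor$-morphisms for the conjugation step. Once that is noted, the argument is a one-line conjugation witnessing that the lifted morphism $\alpha$ obtained from $\Mor$-independence of $(f_1,f_2)$ serves equally well for $(g_1,g_2)$.
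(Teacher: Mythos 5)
Your proof is correct and takes essentially the same route as the paper's: both arguments conjugate the given endomorphisms of $D_1,D_2$ by the $\Hom$-isomorphisms witnessing $|f_i|_{\Hom}=|g_i|_{\Hom}$ to obtain endomorphisms of $C_1,C_2$, invoke $\Mor$-independence of $f_1,f_2$ to get the lift on $X$, and check by the same short chain of equalities that this lift also witnesses independence of $g_1,g_2$. The only cosmetic difference is that the paper names two isomorphisms $i_k,j_k$ for each subobject where you use a single $h_i$ and its inverse.
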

\begin{proof}
	Suppose
	\begin{tikzcd}C_1 \arrow[>->]{r}{f_1} & X\arrow[<-<]{r}{f_2} & C_2 \end{tikzcd}
	are $\Mor$-independent and consider the diagram below.
	\begin{center}
	\begin{tikzpicture}
	  \matrix (m) [matrix of math nodes,row sep=2em,column sep=3em,minimum width=2em]
	  {
		 D_1  &     &   &     & D_2 \\
		      & C_1 & X & C_2 &     \\
		      & C_1 & X & C_2 &     \\
		 D_1  &     &   &     & D_2 \\};
	  \path[-stealth]
		(m-1-1) edge node [above] {$i_1$} (m-2-2)
				edge[>->,bend left] node [above] {$g_1$} (m-2-3)
				edge node [left] {$\alpha_1$} (m-4-1)
		(m-2-2) edge[>->] node [above] {$f_1$} (m-2-3)
				edge [dashed] node [left] {$\beta_1$} (m-3-2)
		(m-3-2) edge node [above] {$j_1$} (m-4-1)
				edge[>->] node [above] {$f_1$} (m-3-3)
		(m-4-1) edge[>->,bend right] node [below] {$g_1$} (m-3-3)
		(m-2-3) edge [dashed] node [left] {$\gamma$} (m-3-3)
		(m-2-4) edge[>->] node [above] {$f_2$} (m-2-3)
				edge [dashed] node [right] {$\beta_2$} (m-3-4)
		(m-3-4) edge[>->] node [above] {$f_2$} (m-3-3)
				edge node [right] {$j_2$} (m-4-5)
		(m-1-5) edge[>->,bend right] node [above] {$g_2$} (m-2-3)
				edge node [above] {$i_2$} (m-2-4)
				edge node [right] {$\alpha_2$} (m-4-5)
		(m-4-5) edge[>->,bend left] node [above] {$g_2$} (m-3-3);
	\end{tikzpicture}	
	\end{center}
	Since $|f_1|_{\Hom}=|g_1|_{\Hom}$ and $|f_2|_{\Hom}=|g_2|_{\Hom}$
	there are $\Hom$-isomorphisms $i_1, j_1$ and $i_2, j_2$
	as figured. Take and arbitrary $\Mor$-morphism
	$\alpha_1:D_1\to D_1$.
	Let
\begin{eqnarray}
\beta_1 &=& i_1^{-1}\alpha_1j_1^{-1}:C_1\to C_1\\
\beta_2 &=& i_2^{-1}\alpha_2 j_2^{-1}: C_2\to C_2
\end{eqnarray}
By assumption
	\begin{tikzcd}C_1 \arrow[>->]{r}{f_1} & X\arrow[<-<]{r}{f_2} & C_2 \end{tikzcd}
	are $\Mor$-independent, therefore there is a suitable $\Mor$-morphism $\gamma:X\to X$. Then we obtain
	\[ g_1\gamma=i_1f_1\gamma = i_1\beta_1f_1 = i_1\beta_1j_1g_1 = \alpha_1g_1 \]
	and similarly
	\[ g_2\gamma=i_2f_2\gamma = i_2\beta_2f_2 = i_2\beta_2j_2g_2 = \alpha_2g_2 \]
	This completes the proof.
\end{proof}


Our next proposition formulates a very natural necessary condition for independence. The content of the necessary condition can be illustrated
on the example of the category $\bC$ of structures. Let $A$ and $B$ substructures of $C$. If we take two morphisms $\alpha_A:A\to A$, $\alpha_B:B\to B$, then a morphism $\gamma: C\to C$ that extends {\em both} $\alpha_A$ \emph{and} $\alpha_B$ can exist
only in the case when $\alpha_A$ and $\alpha_B$ act on $Y=A\cap B$ exactly the same way, i.e. if one has
\begin{equation}\label{eq:cond-identity-on-cap}
\alpha_A\upharpoonright Y \;=\; \alpha_B\upharpoonright Y
\end{equation}
The next proposition we wish to establish expresses this condition in the case of every category. To state the proposition, first we formulate the condition (\ref{eq:cond-identity-on-cap}) in general categorial terms. Since the intersection $Y$ in the category of structures is the pullback $A\times_{C}B$, for the next definition it is assumed that pullbacks exist in $\bC$.

\begin{definition}[$\Mor$-compatibility]
	We say that $\Mor$-morphisms
	\begin{tikzcd}A \arrow[->]{r}{f_A} & C \arrow[<-]{r}{f_B} & B \end{tikzcd}
	are $\Mor$-com\-pa\-tib\-le if the diagram
	\begin{center}
	\begin{tikzpicture}
		  \matrix (m) [matrix of math nodes,row sep=4em,column sep=4em,minimum width=2em]
		  {
			   & A\times_C B  & \\
			 A  & C & B\\
			 A  & C & B\\};
		  \path[-stealth]
			(m-1-2) edge[->] node [left]  {$p_{A}$} (m-2-1)
					edge[->] node [right] {$p_{B}$} (m-2-3)
			(m-2-1) edge[->] node [above] {$f_{A}$} (m-2-2)
			(m-2-3) edge[->] node [above] {$f_{B}$} (m-2-2)
			(m-2-1) edge node [left] {$\alpha_A$} (m-3-1)
			(m-2-3) edge node [left] {$\alpha_B$} (m-3-3)
			(m-3-1) edge[->] node [below] {$f_{A}$} (m-3-2)
			(m-3-3) edge[->] node [below] {$f_{B}$} (m-3-2);
	\end{tikzpicture}\end{center}
	commutes for all $\Mor$-morphisms $\alpha_A, \alpha_B$.
	Here $A\times_C B$ is the pullback.
	\endef
\end{definition}
The next proposition states the sought-after \emph{necessary} condition for independence in a general category:

\begin{proposition}
	If \begin{tikzcd}A \arrow[->]{r}{f_A} & C \arrow[<-]{r}{f_B} & B \end{tikzcd}
	are $\Mor$-independent, then they are also $\Mor$-compatible.
\end{proposition}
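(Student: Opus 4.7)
The plan is to unwind both definitions and observe that independence gives exactly the single morphism on $C$ needed to verify compatibility.

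First I would fix arbitrary $\Mor$-morphisms $\alpha_A: A \to A$ and $\alpha_B: B \to B$. By $\Mor$-independence of $f_A$ and $f_B$, there exists a $\Mor$-morphism $\alpha: C \to C$ satisfying the two commutativity relations
\[
\alpha f_A = f_A \alpha_A, \qquad \alpha f_B = f_B \alpha_B.
\]

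Next I would read off what $\Mor$-compatibility actually asks for in the displayed diagram: the two paths from the pullback $A\times_C B$ down to the bottom copy of $C$ must agree, i.e.\ one must check
\[
f_A \alpha_A p_A \;=\; f_B \alpha_B p_B.
\]
Using the relations provided by $\alpha$ together with the defining equation $f_A p_A = f_B p_B$ of the pullback, I would chain
\[
f_A \alpha_A p_A \;=\; \alpha f_A p_A \;=\; \alpha f_B p_B \;=\; f_B \alpha_B p_B,
\]
which is exactly the required compatibility. Since $\alpha_A, \alpha_B$ were arbitrary, the full diagram in the definition of $\Mor$-compatibility commutes.

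I do not anticipate a genuine obstacle here: the argument is essentially a diagram chase in which $\Mor$-independence supplies the vertical arrow $\alpha$ on $C$, and the universal property of the pullback supplies the horizontal identification $f_A p_A = f_B p_B$. The only point requiring care is a correct reading of the compatibility diagram, so that one recognises that the statement to verify is precisely the equality of the two composites $A\times_C B \to C$ (bottom), which is what the three-step chain above establishes.
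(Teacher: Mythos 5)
Your proof is correct and is essentially identical to the paper's: both fix arbitrary $\alpha_A,\alpha_B$, invoke $\Mor$-independence to obtain the morphism on $C$, and then chain the two commutation relations with the pullback identity $f_Ap_A=f_Bp_B$ to get the compatibility square. The only cosmetic difference is that the paper writes compositions in diagrammatic order and labels the pullback projections $i_A,i_B$ in its proof.
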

\begin{proof}
	Consider the diagram, where $\alpha_A, \alpha_B$ are arbitrary $\Mor$-morphisms and
	$A\times_C B$ is the pullback.
	\begin{center}\begin{tikzpicture}
		  \matrix (m) [matrix of math nodes,row sep=3em,column sep=4em,minimum width=2em]
		  {
			    & A\times_C B & \\
			 A  & C & B\\
			 A  & C & B\\};
		  \path[-stealth]
			(m-1-2) edge[->] node [left] {$i_A$} (m-2-1)
					edge[->] node [left] {$i_B$} (m-2-3)
			(m-2-1) edge[->] node [above] {$f_{A}$} (m-2-2)
			(m-2-3) edge[->] node [above] {$f_{B}$} (m-2-2)
			(m-2-1) edge node [left] {$\alpha_A$} (m-3-1)
			(m-2-3) edge node [left] {$\alpha_B$} (m-3-3)
			(m-3-1) edge[->] node [below] {$f_{A}$} (m-3-2)
			(m-3-3) edge[->] node [below] {$f_{B}$} (m-3-2)
			(m-2-2) edge [dashed] node [right] {$\gamma$} (m-3-2);
	\end{tikzpicture}\end{center}
	We need to show that the diagram without the dashed arrow commutes.
	By $\Mor$-independence, for $\Mor$-morphisms $\alpha_A, \alpha_B$ there exists a
	suitable $\Mor$-morphism $\gamma$. Then
	\[ i_A\alpha_Af_A = i_Af_A\gamma = i_Bf_B\gamma = i_B\alpha_Bf_B \]
	which we had to show.
\end{proof}

For completeness we note that the existence of the pullback $A\times_C B$ in the definition of $\Mor$-compatibility
could be relaxed by replacing the pullback $A\times_C B$ in the diagram by any $Y$ that can be mapped
into $A$ and $B$ (i.e. there are $\Mor$-arrows $Y\to A$ and $Y\to B$) and universally quantifying over $Y$.

In the next proposition let $\oplus$ be a coproduct in the category $(\Ob,\Mor)$, i.e.
$X_1\oplus X_2$ be an element such that there exist $\Mor$-morphisms
(called the coproduct injections)
\[ i_1: X_1\to X_1\oplus X_2\quad \text{ and }\quad i_2: X_2\to X_1\oplus X_2\] having the universal property.

\begin{proposition}\label{prop:coprodindep}
	Coproduct injections \begin{tikzcd}X_1 \arrow[->]{r}{i_1} & X_1\oplus X_2 \arrow[<-]{r}{i_2} & X_2 \end{tikzcd} are $\Mor$-independent.
\end{proposition}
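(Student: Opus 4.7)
The plan is to appeal directly to the universal property of the coproduct, which is tailor-made for exactly this situation. Unwinding Definition 2.1 for the coproduct injections, what has to be shown is that for any pair of $\Mor$-morphisms $\alpha_1\colon X_1\to X_1$ and $\alpha_2\colon X_2\to X_2$ there is some $\Mor$-morphism $\alpha\colon X_1\oplus X_2\to X_1\oplus X_2$ making
\begin{equation*}
\alpha\circ i_1 \;=\; i_1\circ\alpha_1, \qquad \alpha\circ i_2 \;=\; i_2\circ\alpha_2
\end{equation*}
hold simultaneously.

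The natural candidate is constructed as follows. First I would form the two composite $\Mor$-morphisms $i_1\circ\alpha_1\colon X_1\to X_1\oplus X_2$ and $i_2\circ\alpha_2\colon X_2\to X_1\oplus X_2$. These are $\Mor$-morphisms out of $X_1$ and $X_2$ into the same codomain, which is precisely the data that the universal property of the coproduct consumes. Hence there exists a unique $\Mor$-morphism $\alpha\colon X_1\oplus X_2\to X_1\oplus X_2$ with $\alpha\circ i_1 = i_1\circ\alpha_1$ and $\alpha\circ i_2 = i_2\circ\alpha_2$; this $\alpha$ is exactly the joint extension demanded by Definition 2.1.

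Because the proof is just an invocation of the universal property, there is really no obstacle: the statement is essentially a restatement of what coproducts are designed to do. The only subtlety worth flagging is that Definition 2.1 only asks for existence of $\alpha$, not uniqueness, so one does not need the full strength of the universal property (uniqueness is a bonus here). No appeal to $\Hom$-structure is needed since the statement concerns $\Mor$-independence of specific $\Mor$-morphisms (the coproduct injections themselves), not $\Mor$-independence of $\Hom$-subobjects.
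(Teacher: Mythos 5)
Your proof is correct and follows essentially the same route as the paper: both apply the universal property of the coproduct to the cocone given by the composites $i_1\circ\alpha_1$ and $i_2\circ\alpha_2$, obtaining the copair $[i_1\circ\alpha_1,\, i_2\circ\alpha_2]$ as the required joint morphism. Your side remarks (only existence is needed, and no $\Hom$-structure enters since the claim concerns $\Mor$-independence of the specific injections) are also accurate.
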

\begin{proof}
	Let \begin{tikzcd}X_1 \arrow[->]{r}{i_1} & X_1\oplus X_2 \arrow[<-]{r}{i_2} & X_2 \end{tikzcd} be a coproduct with coproduct injections $i_1$ and $i_2$.
	From the diagram below on the left-hand side, by composing arrows, one gets the
	diagram on the right-hand side which is a coproduct diagram, therefore
	a suitable $m$ with the dotted arrow (which is the copair $[m_1i_1,m_2i_2]$)
	exists and completes the proof.
	\begin{center}
		\begin{tabular}{cc}
			\begin{tikzpicture}
			  \matrix (m) [matrix of math nodes,row sep=3em,column sep=4em,minimum width=2em]
			  {
				 X_1  & X_1 \oplus X_2  & X_2\\
				 X_1  & X_1 \oplus X_2  & X_2\\};
			  \path[-stealth]
				(m-1-1) edge[->] node [above] {$i_{1}$} (m-1-2)
						edge node [left]  {$m_1$} (m-2-1)
				(m-2-1) edge[->] node [below] {$i_{1}$} (m-2-2)
				(m-1-3) edge[->] node [above] {$i_{2}$} (m-1-2)
						edge node [right] {$m_2$} (m-2-3)
				(m-2-3) edge[->] node [below] {$i_{2}$} (m-2-2);
			\end{tikzpicture}	
			& \qquad\qquad
			\begin{tikzpicture}
			  \matrix (m) [matrix of math nodes,row sep=3em,column sep=4em,minimum width=2em]
			  {
				 X_1  & X_1 \oplus X_2  & X_2\\
				      & X_1 \oplus X_2  &    \\};
			  \path[-stealth]
				(m-1-1) edge[->] node [above] {$i_{1}$} (m-1-2)
						edge node [below] {$m_1i_{1}$} (m-2-2)
				(m-1-2) edge [dotted] node [right] {$m$} (m-2-2)
				(m-1-3) edge[->] node [above] {$i_{2}$} (m-1-2)
						edge node [below] {$m_2i_{2}$} (m-2-2);
			\end{tikzpicture}				
		\end{tabular}
	\end{center}
\end{proof}

We remark that coproduct injections in general are not necessarily monic,
however, in certain categories (such as extensive or distributive categories) coproduct injections are automatically monic.


\section{Examples of subobject independence\label{sec:examples}}

\subsection{Sets\label{sec:examples-sets}}
$\mathbf{Set}$ is the category of sets as objects with functions as $\Mor$-morphisms. Let $\Hom = \Mor$ and consider \begin{tikzcd}A \arrow[>->]{r}{f_A} & C \arrow[<-<]{r}{f_B} & B \end{tikzcd}.
Speaking about subobjects we may assume $A,B\subseteq C$, that is, $f_A$ and $f_B$ are the inclusion mappings.
The pullback $A\times_C B$ is just the intersection $A\cap B$. $A$ and $B$ are $\Hom$-compatible if and only if
$A\cap B=\emptyset$ since otherwise one could take permutations of $A$ and $B$ that act differently on the intersection.
It is straightforward to check that $A$ and $B$ are $\Mor$-independent if and only if they are disjoint.

\subsection{Vector spaces}

Let $\mathbf{Vect}_{\mathbb{F}}$ be the category of vector spaces over the field $\mathbb{F}$
with linear mappings as $\Mor$-morphisms. Take $\Hom = \Mor$. If $C$ is a
vector space and $A$, $B$ are subspaces then the pullback $A\times_C B$ is the subspace $A\cap B$. Recall that two
subspaces $A$, $B$ are linearly independent if and only if $A\cap B = \{0\}$. We claim that $\Mor$-independence and linear independence
coincide. Take two $\Mor$-morphisms $\alpha_A:A\to A$ and $\alpha_B:B\to B$.
Then $\alpha_A$ and $\alpha_B$ act on the bases $\<a_i:i\in I\>=A$ and
$\<b_j:j\in J\>=B$. Any function defined on bases can be extended to a
linear mapping,
therefore $\alpha_A$ and $\alpha_B$ have a common extension $\gamma:\<A\cup B\>\to\<A\cup B\>$ if and only if they act on
$A\cap B$ the same way. As $\alpha_A$, $\alpha_B$ were arbitrary, the latter condition is equivalent to $A\cap B=\{0\}$.
Finally, one can extend the set $\{a_i, b_j:i\in I, j\in J\}$ to a basis of $C$ and extend $\gamma$ to be defined on the entire $C$.
A moment of thought shows that $\Mor$-compatibility is also equivalent to $A\cap B=\{0\}$.

\subsection{Pregeometries (Matroids)}

Pregeometries (or matroids in the combinatorial terminology) are defined in order
to capture the notion of independence in a very general framework. Formally, a pregeometry is a tuple
$(X, \cl)$ where $X$ is a set and $\cl:\wp(X)\to\wp(X)$ is a closure operator having a finite character satisfying the
Steinitz exchange principle. Independence and basis can be defined as in vector spaces. A morphism between
two pregeometries $f:(X, \cl_X)\to (Y,\cl_Y)$ is a function $f:X\to Y$ that preserves closed sets, that is,
it satisfies $\cl_X( f^{-1}[Z] ) = f^{-1}[\cl_Y(Z)]$ for all $Z\subseteq Y$. If the two closure operators are topological closure,
then morphisms are just the continuous functions. If $(X, \cl)$ is a pregeometry, then a sub-pregeometry is a tuple
$(Y,\cl\upharpoonright Y)$ where $Y\subseteq X$ is closed: $Y=\cl(Y)$. We denote sub-pregeometries by $Y\leq X$.
Let $\mathsf{Pregeom}$ be the category of pregeometries with $\Mor = \Hom$ as described above. Then $\Mor$-independence of $A,B\leq C$ coincides with independence of $A$ and $B$ in the pregeometry
sense. The proof is similar to that of the vector space case. $A, B\leq C$ are independent if and only if $A\cap B = \cl(\{\emptyset\})$.
This holds only if neither $A$ nor $B$ has basis elements in $A\cap B$. In this case any basis of $A$ and $B$ can be concatenated
and extended to a basis of $C$ in the same way as in the case of vector spaces. The result follows then from the observation
that any $\Mor$-morphism can be identified with an action on the elements of a basis.

\subsection{Boolean algebras\label{eg:Boolean}} Let $\mathbf{Bool}$ be the category of Boolean algebras as objects with {\em injective} homomorphisms as
$\Mor$-morphisms. As before, we set $\Hom=\Mor$. Two subalgebras $A,B\leq C$ are called Boole-independent
if for all $a\in A$, $b\in B$ we have $a\land b\neq 0$ provided $a\neq 0\neq b$.  Boole independence is logical independence if the Boolean algebras are viewed as the Tarski-Lindenbaum algebra of a classical propositional logic: $a\land b\neq 0$ entails that there is an interpretation on $C$ that makes $a\land b$ hence both $a$ and $b$ true; i.e. any two propositions that are not contradictions can be jointly true in some interpretation. How is this Boolean (logical) independence related to $\Mor$-independence? The connection between Boole-independence and $\Mor$-independence is a bit more subtle than in the previous examples.

\begin{itemize}
\item[(1)] $\Mor$-independence does not imply Boole-independence. Consider the case when $C$ is finite, $\{c_1$, $\ldots$, $c_n\}$
is the set of atoms of $C$ and the subalgebras $A$ and $B$ are generated by distinct set of atoms $A = \<c_1,\ldots, c_k\>$,
$B = \<c_{k+1}, \ldots, c_n\>$. Clearly $A$ and $B$ are not Boole-independent. However, any $\Mor$-morphisms (i.e. automorphism, because in the finite case every injective homomorphism into itself is an automorphism) of $A$ (resp. $B$)
comes from a permutation of atoms generating $A$. Conversely any permutation of atoms extend to an automorphism.
Given automorphisms $\alpha_A$ and $\alpha_B$ of $A$ and $B$, respectively, give rise to a permutation of all the atoms of $C$
which extends to an automorphism of $C$. Consequently, $A$ and $B$ are $\Mor$-independent.

\item[(2)] Boole-independence of $A,B\leq C$ implies $\Mor$-independence if $A\cup B$ generates $C$.
A Boolean algebra $C$ is the internal sum of the subalgebras $A$ and $B$ just in case the union $A\cup B$ generates $C$
and whenever $a\in A$, $b\in B$ are non-zero elements, then $a\land b\neq 0$.
(Internal) sum of Boolean algebras is just the coproduct of the algebras (up to isomorphism) whence by
Proposition \ref{prop:coprodindep} $\Mor$-independence of $A$ and $B$ follows.

If $A, B\leq C$ are Boole-independent but $A\cup B$ does not generate $C$ (i.e. the internal sum $A\oplus B$
of $A$ and $B$ is a proper subalgebra of $C$), then a similar argument shows that
any $\Mor$-morphism $\alpha_A: A\to A$ and $\alpha_B: B\to B$ can be {\em jointly} extended to a $\Mor$-morphism $\gamma: A\oplus B\to A\oplus B$. The question whether
$\gamma$ can be further extended to an $\Mor$-morphism $C\to C$ is non-trivial and is related to
the injectivity of $C$. Injective Boolean algebras are essentially the complete ones in the category $\mathbf{Bool}$
\cite{Pierce2014}[p.117, 16(c,e)].
Consequently Boole-independence implies $\Mor$-independence in any complete Boolean algebra.
\end{itemize}

\subsection{Logical and categorial subobject independence in quantum logic}
\def\OML{\mathbf{OML}}

Logical independence is meaningful in categories of lattices that are not distributive. The relevant examples for physics are the von Neumann lattices (in particular Hilbert lattices) that are interpreted as quantum logic: If $\cA$ and $\cB$ are two von Neumann subalgebras of von Neumann algebra $\cC$, and $\cP(\cA), \cP(\cB)$ and $\cP(\cC)$ denote the corresponding orthomodular lattices of projections, then $\cP(\cA)$ and $\cP(\cB)$ can be defined to be logically independent if $a\es b\not=0$ whenever $\cP(\cA)\ni a\not=0$ and $\cP(\cB)\ni b\not=0$ (\cite{Redei1995a}, \cite{Redei1995b} \cite{Redei1998}[Section 11]). Taking the category of von Neumann lattices with orthomodular lattice homomorphisms as morphisms the notion of subobject independence becomes meaningful and the problem of relation of logical independence of von Neumann lattices and the subobject independence emerges in this category just like in the category of Boolean algebras. We clarify here the relation of logical independence to subobject independence in the context of general orthomodular lattices \cite{Kalmbach1983}.

Let $\OML$ be the category of orthomodular lattices as objects with injective ortho-homomorphisms as $\Mor$-morphisms. Take $\Hom=\Mor$.
Logical independence of orthomodular sublattices $A$ and $B$ of the orthomodular lattice $C$ is defined as in case of Boolean algebras: $a\es b\neq0$ whenever $A\ni a\not=0$ and $B\ni b\not=0$. The connection between logical independence and $\Mor$-independence in this general context is similar to the one in the category of Boolean algebras. To describe the relation, recall first the notion of internal direct sum for lattices (see e.g. \cite{Maeda1949}): If $L$ is a lattice (not necessarily orthomodular) and $x,y\in L$, then write $x\del y$ if for all $z\in L$ we have $(x\lor z)\land y = z\land y$. Clearly $x\land y = 0$ implies $x\del y$.
Let $S$ and $Q$ be {\em subsets} of $L$. We say that $L$ is the internal direct sum of $S$ and $Q$ (and we write $L = S\oplus Q$) if
\begin{enumerate}
	\item[(1)] each $x\in L$ can be written as $x=s\lor q$ with $s\in S$ and $q\in Q$;
	\item[(2)] $s\in S$, $q\in Q$ entails $s\del q$.
\end{enumerate}
If $S$ and $Q$ are (orthomodular) lattices, then their direct product is an (orthomodular) lattice, and there is a natural (ortho)-isomorphism between their direct product and their internal direct sum given by $(s,q)\mapsto s\lor q$ (see \cite{Maeda1949}). It follows that any homomorphisms given on the direct summands $S$ and $Q$ extends to a homomorphism on their internal direct sum.
We have then the following characterization of the relation of logical independence and subobject independence in the category of orthomodular lattices:
\begin{itemize}
\item[(1)] $\Mor$-independence does not imply logical independence of sub-orthomodular lattices. This follows from what was said about Boolean algebras in section \ref{eg:Boolean} because every Boolean algebra is an orthomodular lattice and we saw that $\Mor$-independence does not imply Boole-independence. ($\mathbf{Bool}$ is a complete subcategory of $\OML$).

\item[(2)] Logical independence of $A,B\leq C$ implies $\Mor$-independence if $A\cup B$ generates $C$. In this case $C = A\oplus B$ is the internal sum of $A$ and $B$ since logical independence ensures $a\del b$ for each $a\in A$, $b\in B$. On the other hand, each $x\in C$ can be written as $x=a\lor b$ with $a\in A$ and $b\in B$ as $A\cup B$ generates $C$.

If $A, B\leq C$ are logically independent but $A\cup B$ does not generate $C$ (i.e. the internal sum $A\oplus B$ of $A$ and $B$ is a proper subalgebra of $C$), then a similar argument shows that any $\Mor$-morphism $\alpha_A: A\to A$ and $\alpha_B: B\to B$ can be {\em jointly} extended to a $\Mor$-morphism $\gamma: A\oplus B\to A\oplus B$. The question whether
$\gamma$ can be further extended to an $\Mor$-morphism $C\to C$ is non-trivial and is related to the injectivity of $C$. We are not aware of any useful characterization of injective objects in $\OML$.
\end{itemize}



\section{Categorial subobject independence and tensor product structure\label{sec:tensor}}

Components of tensor products are typically regarded ``independent" within the tensor product. The paradigm example is the standard product of probability measure spaces with the product measure on the product of the component measurable spaces. In this section we investigate the relation of categorial subobject independence and the tensor product structure in a category. We will see that categorial subobject independence of the components of the tensor product is not automatic. We will however isolate conditions on the tensor category that entail subobject independence of the components in the tensor product (Proposition \ref{cor:tensor-sufficient}).

Recall first the definition of a tensor product in a category (cf. Section 7.8 in \cite{Awodey2010})


\begin{definition}
	A bifunctor $\oprod:\bC\times\bC\to\bC$ is a tensor product if it is
	associative up to a natural isomorphism and there is an element $I$
	that acts as a left and right identity (up to isomorphism).\endef
\end{definition}

A category with a tensor product $(\bC, \oprod)$ is a tensorial category (monoidal category) if
$\oprod$ satisfies the pentagon and triangle axioms.
If a category has products or coproducts for all finite sets of objects, then the category
can be turned into a tensor category by adding the product or coproduct as a bifunctor (due to
the universal property of products and coproducts).

For the next definition suppose that $(\bC, \oprod)$ is such that for any two objects $A$, $B$ there
are $\Mor$-morphisms \begin{center}\begin{tikzcd} A \arrow[->]{r}{i_A} & A\oprod B \arrow[<-]{r}{i_B} & B \end{tikzcd}\end{center}

\begin{definition}[$\oprod$-independence of $\Mor$-morphisms]\label{def:tensindep}
	$\Mor$-morphisms \begin{tikzcd}A \arrow[->]{r}{f_A} & C \arrow[<-]{r}{f_B} & B \end{tikzcd}
	are called $\oprod$-independent if there exists a $\Mor$-morphism
	$h:A\otimes B\to C$ such that the following diagram commutes.
	\begin{center}
		\begin{tikzpicture}
		  \matrix (m) [matrix of math nodes,row sep=3em,column sep=4em,minimum width=2em]
		  {
		     A & A\otimes B & B  \\
		       & C &  \\};
		  \path[-stealth]
		    (m-1-1) edge node [left] {$f_A$} (m-2-2)
		    (m-1-3) edge node [right] {$f_B$} (m-2-2)
		    (m-1-1) edge[->] node [above] {$i_A$} (m-1-2)
		    (m-1-3) edge[->] node [above] {$i_B$} (m-1-2)
		    (m-1-2) edge [dashed] node [right] {$h$} (m-2-2);
		\end{tikzpicture}	
	\end{center}\endef
\end{definition}

If $\oprod$ is the coproduct, then the universal property of coproducts implies the existence of such a $h$ in the definition.

$\Mor$-independence of components of tensor products is not automatic.  As a counterexample
consider the category of sets with the tensor product being the union operation. Then
$(\mathbf{Set}, \cup)$ is a monoidal category; yet if $A$ and $B$ are
non-disjoint sets, then \begin{tikzcd}A \arrow[>->]{r}{\subseteq} & A\cup B \arrow[<-<]{r}{\subseteq} & B \end{tikzcd} are not $\Mor$-independent
(see subsection \ref{sec:examples-sets}). Also note that there are tensorial categories where components of a
tensor product cannot even be mapped into the tensor product hence they are not subobjects (an example is the category of rings with homomorphisms). This motivates Definition \ref{def:regular-category} below. Note that $\oprod$ being a bifunctor means that it acts on $\Mor\times\Mor$ too; that is to say: if $f:A\to A'$ and $g:B\to B'$ are two morphisms, then there is a morphism $f\oprod g: A\oprod B\to A'\oprod B'$.

\begin{definition}\label{def:regular-category}
	 The tensorial category $(\Ob, \Mor, \oprod)$ is called
	{\em $\Hom$-regular} if (i) and (ii) below hold.
	\begin{itemize}
		\item[(i)] For all objects $A, B$ there are $\Hom$-monomorphisms
		 \begin{center}\begin{tikzcd} A \arrow[>->]{r}{i_A} & A\oprod B \arrow[<-<]{r}{i_B} & B \end{tikzcd}\end{center}
		We call these $\Hom$-monomorphisms {\em canonical injections}.
			
		\item[(ii)] For any pairs of $\Mor$-morphisms $m_A:A\to A'$ and $m_B:B\to B'$
		the tensor product arrow $m_A\oprod m_B$ makes the following
		diagram commute.
		\begin{center}
				\begin{tikzpicture}
				  \matrix (m) [matrix of math nodes,row sep=3em,column sep=4em,minimum width=2em]
				  {
					 A  & A \oprod B  & B\\
					 A'  & A' \oprod B'  & B'\\};
				  \path[-stealth]
					(m-1-1) edge[->] node [above] {$i_{A}$} (m-1-2)
							edge node [left]  {$m_A$} (m-2-1)
					(m-2-1) edge[->] node [below] {$i_{A'}$} (m-2-2)
					(m-1-3) edge[->] node [above] {$i_{B}$} (m-1-2)
							edge node [right] {$m_B$} (m-2-3)
					(m-2-3) edge[->] node [below] {$i_{B'}$} (m-2-2)
					(m-1-2) edge[dashed] node [left] {$m_A\oprod m_B$} (m-2-2);
				\end{tikzpicture}	
		\end{center}			
	\end{itemize}\endef
\end{definition}
We then have as an immediate consequence of regularity:
\begin{corollary}\label{cor:tensor-sufficient}
	If $(\Ob, \Mor, \oprod)$ is a {\em $\Hom$-regular} tensorial
	category, then
	the canonical injections are $\Mor$-independent.
\end{corollary}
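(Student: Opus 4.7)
The plan is to unpack the definition of $\Mor$-independence for the canonical injections $i_A$ and $i_B$ and verify it using condition (ii) of $\Hom$-regularity essentially verbatim. Given arbitrary $\Mor$-morphisms $\alpha_A: A \to A$ and $\alpha_B: B \to B$, I need to produce a $\Mor$-morphism $\gamma: A\oprod B \to A\oprod B$ fitting the diagram of Definition 2.1, i.e.\ satisfying $\gamma \circ i_A = i_A \circ \alpha_A$ and $\gamma \circ i_B = i_B \circ \alpha_B$.

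The natural candidate is $\gamma := \alpha_A \oprod \alpha_B$, which exists as a $\Mor$-morphism because $\oprod$ is a bifunctor on $\bC \times \bC$. To verify that this choice works, I would instantiate condition (ii) of $\Hom$-regularity with $A' = A$, $B' = B$, $m_A = \alpha_A$ and $m_B = \alpha_B$. The regularity diagram then reduces precisely to the two equations
\begin{align}
(\alpha_A \oprod \alpha_B) \circ i_A &= i_A \circ \alpha_A, \\
(\alpha_A \oprod \alpha_B) \circ i_B &= i_B \circ \alpha_B,
\end{align}
which is exactly the commutativity demanded by Definition 2.1 of $\Mor$-inde\-pend\-ence of $i_A$ and $i_B$.

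There is essentially no obstacle here: $\Hom$-regularity was engineered so as to bundle both the existence of the canonical injections (clause (i)) and the functoriality/compatibility of $\oprod$ on morphisms with those injections (clause (ii)). The only small point worth being explicit about is that the injections $i_A, i_B$ are $\Hom$-monomorphisms (hence in particular $\Mor$-morphisms, since $\Hom \subseteq \Mor$), so that it is meaningful to speak of their $\Mor$-independence at all; this is guaranteed by clause (i). Once that is noted, the corollary is immediate and no further argument is required.
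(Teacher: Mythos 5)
Your proof is correct and is exactly the paper's argument: the paper's own proof is the one-liner ``Take $A'=A$ and $B'=B$ in the definition of regularity,'' and you have simply spelled out that this instantiation, with $m_A=\alpha_A$ and $m_B=\alpha_B$, yields $\gamma=\alpha_A\oprod\alpha_B$ satisfying the commutativity required by Definition \ref{def:independence-morphisms}. No gap; the extra remark that $i_A,i_B$ lie in $\Hom\subseteq\Mor$ by clause (i) is a harmless and accurate clarification.
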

\begin{proof}
	Take $A'=A$ and $B'=B$ in the definition of regularity.
\end{proof}

\begin{definition}[$(\Hom, \oprod)$-independence of $\Hom$-morphisms]
	$\Hom$-morphisms \begin{tikzcd}A \arrow[->]{r}{f_A} & C \arrow[<-]{r}{f_B} & B \end{tikzcd}
	are called $(\Hom, \oprod)$-independent if there exists a
	$\Hom$-morphism	$h:A\otimes B\to C$ such that the following diagram commutes.
	\begin{center}
		\begin{tikzpicture}
		  \matrix (m) [matrix of math nodes,row sep=3em,column sep=4em,minimum width=2em]
		  {
		     A & A\otimes B & B  \\
		       & C &  \\};
		  \path[-stealth]
		    (m-1-1) edge node [left] {$f_A$} (m-2-2)
		    (m-1-3) edge node [right] {$f_B$} (m-2-2)
		    (m-1-1) edge[>->] node [above] {$i_A$} (m-1-2)
		    (m-1-3) edge[>->] node [above] {$i_B$} (m-1-2)
		    (m-1-2) edge [dashed] node [left] {$h$} (m-2-2);
		\end{tikzpicture}	
	\end{center} \endef
\end{definition}

\begin{definition}[$\Hom$-injectivity]
	An object $Q$ is $\Hom$-injective if for all $A$ and arrows
	in the diagram below
	we have
	\begin{center}\begin{tikzpicture}
				  \matrix (m) [matrix of math nodes,row sep=3em,column sep=4em,minimum width=2em]
				  {
					 A  & Q  \\
					    & Q  \\};
				  \path[-stealth]
					(m-1-1) edge[->] node [above] {$\in\Hom$} (m-1-2)
							edge node [below] {$\in\Mor$} (m-2-2)
					(m-1-2) edge [dotted] node [right] {$\exists\in\Mor$} (m-2-2);
	\end{tikzpicture}\end{center}
	\endef
\end{definition}

\begin{proposition}\label{prop:TensorMorIndep}
	In a $\Hom$-regular tensorial category $(\Ob, \Mor, \oprod)$
	we have that
	$(\Hom,\oprod)$-independence of $\Hom$-subobjects in a $\Hom$-injective object
	implies $\Mor$-independence of the $\Hom$-subobjects.
\end{proposition}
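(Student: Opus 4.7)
The plan is to assemble the three ingredients of the hypothesis --- $\Hom$-regularity, $(\Hom,\oprod)$-independence, and $\Hom$-injectivity of $C$ --- in a single diagram, and then read off the desired extension $\gamma : C \to C$ from the injectivity property. First I would invoke Proposition~\ref{prop:one-morphism-enough} to reduce to fixed representatives $f_A : A \to C$ and $f_B : B \to C$ of the two $\Hom$-subobjects: it then suffices to show that, for arbitrary $\Mor$-morphisms $\alpha_A : A \to A$ and $\alpha_B : B \to B$, some $\gamma \in \Mor$ satisfies $\gamma f_A = f_A \alpha_A$ and $\gamma f_B = f_B \alpha_B$.

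Next I would use the bifunctoriality of $\oprod$ on $(\Ob, \Mor)$ to form $\alpha_A \oprod \alpha_B : A \oprod B \to A \oprod B$. Clause (ii) of $\Hom$-regularity, applied with $A' = A$ and $B' = B$, then yields the intertwining relations $(\alpha_A \oprod \alpha_B) \circ i_A = i_A \circ \alpha_A$ and $(\alpha_A \oprod \alpha_B) \circ i_B = i_B \circ \alpha_B$ between the given $\Mor$-morphisms on the subobjects and the canonical $\Hom$-injections $i_A, i_B : A, B \to A\oprod B$ supplied by clause (i) of regularity.

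The hypothesis of $(\Hom, \oprod)$-independence provides a $\Hom$-morphism $h : A \oprod B \to C$ with $h \circ i_A = f_A$ and $h \circ i_B = f_B$. Consider the composite $g := h \circ (\alpha_A \oprod \alpha_B) : A \oprod B \to C$, which lies in $\Mor$ since $\Hom \subseteq \Mor$. Because $C$ is $\Hom$-injective and $h \in \Hom$, the injectivity diagram applied to the pair $(h, g)$ produces a $\Mor$-morphism $\gamma : C \to C$ with $\gamma \circ h = h \circ (\alpha_A \oprod \alpha_B)$. A short chase then gives $\gamma f_A = \gamma h i_A = h (\alpha_A \oprod \alpha_B) i_A = h i_A \alpha_A = f_A \alpha_A$, and symmetrically $\gamma f_B = f_B \alpha_B$, as required.

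The main obstacle is conceptual rather than computational: one must notice that $(\Hom,\oprod)$-independence delivers an extension $h$ that lives in $\Hom$ (not merely in $\Mor$), because the definition of $\Hom$-injectivity requires the arrow one extends \emph{along} to be a $\Hom$-morphism. Once this fit between the two hypotheses is recognised, regularity does nothing more than transport $\alpha_A$ and $\alpha_B$ up to $A \oprod B$ as the single arrow $\alpha_A \oprod \alpha_B$, and injectivity transports that single arrow further up to $C$ as $\gamma$; the remaining verification is automatic.
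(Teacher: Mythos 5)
Your proof is correct and follows essentially the same route as the paper's: transport $\alpha_A,\alpha_B$ to $\alpha_A\oprod\alpha_B$ via regularity, use the $\Hom$-morphism $h$ supplied by $(\Hom,\oprod)$-independence, and apply $\Hom$-injectivity to the pair $h$ and $h\circ(\alpha_A\oprod\alpha_B)$ to obtain $\gamma$, finishing with the same diagram chase (the paper writes two extensions $u,v$ where you use one $h$, which is immaterial). Your explicit appeal to Proposition~\ref{prop:one-morphism-enough} and your observation that $h$ must lie in $\Hom$ for injectivity to apply are both apt.
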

\begin{proof}
	Suppose
	\begin{tikzcd}A \arrow[>->]{r}{f_A} & Q \arrow[<-<]{r}{f_B} & B \end{tikzcd}
	are $\Hom$-monomorphisms representing two $\Hom$-subobjects which are
	$(\Hom,\oprod)$-independent. Let $Q$ be $\Hom$-injective and
	consider the diagram below.
	
	\begin{center}
		\begin{tikzpicture}
		  \matrix (m) [matrix of math nodes,row sep=3em,column sep=4em,minimum width=2em]
		  {
		     A & A\otimes B & B  \\
		       & Q &  \\
			   & Q & \\
		     A & A\otimes B & B  \\};
		  \path[-stealth]
		    (m-1-1) edge[->] node [above] {$i_A$} (m-1-2)
		    (m-1-3) edge[->] node [above] {$i_B$} (m-1-2)
		    (m-1-1) edge node [left] {$f_A$} (m-2-2)
		    (m-1-3) edge node [right] {$f_B$} (m-2-2)
		    (m-1-2) edge [dashed] node [right] {$\exists u$} (m-2-2)
			(m-1-1) edge node [left] {$\alpha_A$} (m-4-1)
			(m-1-3) edge node [right] {$\alpha_B$} (m-4-3)
		    (m-4-1) edge[->] node [below] {$i_A$} (m-4-2)
		    (m-4-3) edge[->] node [below] {$i_B$} (m-4-2)
		    (m-4-1) edge node [left] {$f_A$} (m-3-2)
		    (m-4-3) edge node [right] {$f_B$} (m-3-2)
		    (m-4-2) edge [dashed] node [right] {$\exists v$} (m-3-2)
			(m-1-2) edge [bend right] node [right] {$h$} (m-4-2)
			(m-2-2) edge [dashed] node [right] {$j$} (m-3-2);
		\end{tikzpicture}	
	\end{center}
	
	By $(\Hom,\oprod)$-independence of
	\begin{tikzcd}A \arrow[>->]{r}{f_A} & Q \arrow[<-<]{r}{f_B} & B \end{tikzcd}
	there is $u, v:A\oprod B\to Q$ with $u,v\in\Hom$ and by
	regularity of the tensorial
	category there is $h:A\oprod B\to A\oprod B$, $h = \alpha_A\oprod\alpha_B$
	making the diagram commute.
	Applying $\Hom$-injectivity of $Q$ for $u$ and $hv$ we get a suitable
	$j:Q\to Q$ with $uj=hv$. Then
	\[  f_Aj = i_Auj = i_Ahv = \alpha_Ai_Av = \alpha_Af_A \]
	and similarly
	\[  f_Bj = i_Buj = i_Bhv = \alpha_Bi_Bv = \alpha_Bf_B \]	
\end{proof}

The intuitive content of Proposition \ref{prop:TensorMorIndep} is as follows.
Suppose $A$ and $B$ are $\Hom$-subobjects of an $\Hom$-injective
object $Q$. The subobject relations are witnessed by the $\Hom$-arrows
$f_A$ and $f_B$. $(\Hom,\oprod)$-independence tells us that $A$ and $B$,
as subobjects, lie in $Q$ in a similar manner as they lie in the tensor product $A\oprod B$, i.e. the tensor product can be mapped into $Q$
via some $\Hom$-arrow $u$ in such a way that the canonical injections (that witness that $A$ and $B$ are $\Hom$-subobjects of the tensor product) commute with $f_A$, $f_B$ and $u$. Take any two
$\Mor$-morphisms $\alpha_A: A\to A$ and $\alpha_B: B\to B$. By $\Hom$-regularity of the tensor product this two mappings are
{\em jointly} implementable by a single morphism $h$ {\em on the tensor product}. The question is whether this mapping $h$ can be extended to a mapping defined on the entire $Q$. $\Hom$-injectivity of $Q$ does this
favour to us: $\Hom$-injectivity guarantees that any $\Mor$-morphism
defined on a $\Hom$-subobject can be extended as a $\Mor$-morphism
acting on $Q$.

$\oprod$-independence of $\Mor$-morphisms (Definition \ref{def:tensindep}) and the notion of a regular category (Definition \ref{def:regular-category}) was introduced and studied in \cite{Franz2002} under different names. In \cite{Franz2002} the notion of a {\em tensor product with projections} or {\em with inclusions} has been defined (essentially, this is our Definition \ref{def:regular-category}). It was shown in \cite{Franz2002} that the definition of stochastic independence relies on such a structure and that independence can be defined in an arbitrary category with a tensor product with inclusions or projections in a manner similar to Definition \ref{def:tensindep}. It turns out that the standard notion of stochastic independence of classical random variables is equivalent to $\oprod$-independence of objects in the category of random variables (for more detail see \cite{Franz2002}). Moreover, the classifications of quantum stochastic independence by Muraki, Ben Ghorbal, and Sch\"urmann has been shown to be classifications of the tensor products with inclusions for the categories of algebraic probability spaces and non-unital algebraic probability spaces. Thus $\oprod$-independence of $\Mor$-morphisms is directly relevant for stochastic independence in the context of quantum probability spaces.

\section{Subsystem independence as subobject independence in the category of \C algebras with respect to operations as morphisms\label{sec:opind}}

In this section $(\mathfrak{Alg}, Op_{\mfAlg})$ denotes the category of \C algebras, where the elements in the class of morphisms $Op_{\mfAlg}$ are the non-selective operations: completely positive, unit preserving linear maps on \C algebras. Operations represent physical operations performed on quantum physical systems whose algebra of observables are represented by the (selfadjoint) part of the \C algebra the operation is defined on. Examples of operations include states, conditional expectations (in particular the projection postulate), operations that are given by Kraus operators, and more (see \cite{Kraus1983} for the elementary theory and physical interpretation of operations, \cite{Arveson1969} for some basic properties, and \cite{Paulsen2003} for a systematic treatment of operations from the perspective of operator spaces.) Specifically, \C algebra homomorphisms are completely positive; hence the class $hom_{\mathfrak{Alg}}$ of injective \C algebra homomorphism is a subclass of $Op_{\mfAlg}$. Thus it is meaningful to talk about $Op_{\mfAlg}$-independence in the sense of the following definition:

\begin{definition}\label{def:opind}
{\rm
\C subalgebras $\cA,\cB$ of \C algebra $\cC$ are called $Op_{\mfAlg}$-independent in $\cC$ if $\cA$ and $\cB$ are $Op_{\mfAlg}$-independent as $hom_{\mathfrak{Alg}}$-subobjects of object $\cC$ in the category $(\mathfrak{Alg}, Op_{\mfAlg})$ of \C algebras in the sense of Definition \ref{def:subobj-indep}.
}
\endef
\end{definition}

The notion of $Op_{\mfAlg}$-independence of \C subalgebras was first formulated in categorial terms in \cite{Redei2014SHPMP} but its content, expressed in a non-categorial terminology and called ``operational \C independence" appeared already in \cite{Redei-Summers2009}. The content of $Op_{\mfAlg}$-independence of \C subalgebras $\cA,\cB$ of \C algebra $\cC$ is that operations on the \C subalgebras $\cA,\cB$ have a joint extension to the \C algebra $\cC$. This kind of independence has a direct physical interpretation: The physical content of $Op_{\mfAlg}$-independence is that any two physical operations (for instance measurement interaction) performed on the two subsystems observables of which are represented by $\cA$ and $\cB$, respectively, can be performed as a single physical operation on the larger system observables of which are represented by $\cC$.

Note that $Op_{\mfAlg}$-independence of $\cA, \cB$ in $\cC$ has two components: (i) that operations on $\cA$ and $\cB$ \emph{can} be extended to $\cC$; and (ii) that there exists a \emph{joint} extension. Already (i) is a non-trivial demand because operations on \C subalgebras are not always extendable to the larger algebra \cite{Arveson1969}. Formulated differently: Not all \C algebras are injective. This fact complicates the implementation of subsystem independence as $Op_{\mfAlg}$-independence in the categorial formulation of quantum field theory (see the end of the final section of the paper). Also note that $Op_{\mfAlg}$-independence does not require that the extension of the operations on $\cA$ and $\cB$ factorize across $\cA$ and $\cB$; i.e. the extension need not be a product extension. One can strengthen the notion of $Op_{\mfAlg}$-independence by requiring the existence of a product extension; we call the resulting concept of independence $Op_{\mfAlg}$-independence \emph{in the product sense}.

$(\mathfrak{Alg}, Op_{\mfAlg})$ is a tensor category with respect to the minimal \C tensor product $\cA\otimes\cB$ of \C algebras $\cA$ and $\cB$. Since algebras $\cA$ and $\cB$ have units $I_{\cA}, I_{\cB}$, they can be injected into the tensor product by the $hom_{\mathfrak{Alg}}$-morphisms $\cA\ni A\mapsto A\otimes I_{\cB}$ and $\cB\ni B\mapsto I_{\cA}\otimes B$. Thus the canonical $hom_{\mathfrak{Alg}}$-injections in
Definition \ref{def:regular-category} (i) exist and item (ii) in Definition
\ref{def:regular-category} is also fulfilled. Thus $(\mathfrak{Alg}, Op_{\mfAlg}, \oprod)$ is a $hom_{\mathfrak{Alg}}$-regular category in the sense of Definition \ref{def:regular-category}. It follows that Proposition \ref{prop:TensorMorIndep} applies and we obtain

\begin{proposition}\label{prop:C*tensor-indep}
\C algebras $\cA\thickapprox\cA\otimes I_{\cB}$ and $\cB\thickapprox I_{\cA}\otimes \cB$ are $Op_{\mfAlg}$-independent in $\cA\otimes\cB$.
\end{proposition}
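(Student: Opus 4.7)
The plan is to combine the $hom_{\mathfrak{Alg}}$-regularity of $(\mathfrak{Alg}, Op_{\mfAlg}, \oprod)$, just established in the paragraph preceding the proposition, with Corollary \ref{cor:tensor-sufficient}, and then to transport the resulting morphism-level independence to the level of $hom_{\mathfrak{Alg}}$-subobjects via Proposition \ref{prop:one-morphism-enough}.

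Concretely, I would first single out as distinguished representatives of the two $hom_{\mathfrak{Alg}}$-subobjects the canonical unital $*$-monomorphisms $i_{\cA}:\cA\to\cA\oprod\cB$, $A\mapsto A\otimes I_{\cB}$ and $i_{\cB}:\cB\to\cA\oprod\cB$, $B\mapsto I_{\cA}\otimes B$. Given arbitrary operations $\alpha_{\cA}:\cA\to\cA$ and $\alpha_{\cB}:\cB\to\cB$ in $Op_{\mfAlg}$, I would produce the witness demanded by Definition \ref{def:independence-morphisms} as $\gamma=\alpha_{\cA}\oprod\alpha_{\cB}:\cA\oprod\cB\to\cA\oprod\cB$. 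This $\gamma$ is a member of $Op_{\mfAlg}$ because the minimal tensor product of two completely positive unit-preserving linear maps is again completely positive and unit-preserving, the fact already used in verifying $hom_{\mathfrak{Alg}}$-regularity and cited in the paper from \cite{Blackadar2005} and \cite{Redei-Summers2009}. Commutativity of the required square reduces to the identities $\gamma(A\otimes I_{\cB}) = \alpha_{\cA}(A)\otimes\alpha_{\cB}(I_{\cB}) = \alpha_{\cA}(A)\otimes I_{\cB}$ and $\gamma(I_{\cA}\otimes B) = \alpha_{\cA}(I_{\cA})\otimes\alpha_{\cB}(B) = I_{\cA}\otimes\alpha_{\cB}(B)$, both of which follow from unit preservation of the $\alpha$'s. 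This is exactly the content of Corollary \ref{cor:tensor-sufficient} applied with $A'=A=\cA$ and $B'=B=\cB$, and it shows $Op_{\mfAlg}$-independence of the specific representative pair $(i_{\cA}, i_{\cB})$.

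Finally, Definition \ref{def:opind} demands $Op_{\mfAlg}$-independence at the level of the $hom_{\mathfrak{Alg}}$-subobjects, i.e., for \emph{every} pair of representatives of the equivalence classes $|i_{\cA}|_{hom_{\mathfrak{Alg}}}$ and $|i_{\cB}|_{hom_{\mathfrak{Alg}}}$, not just for one. Proposition \ref{prop:one-morphism-enough} is tailor-made to bridge this gap: once $Op_{\mfAlg}$-independence holds for the canonical pair $(i_{\cA}, i_{\cB})$, it holds for every other pair of representatives, and hence the $hom_{\mathfrak{Alg}}$-subobjects $\cA\thickapprox\cA\otimes I_{\cB}$ and $\cB\thickapprox I_{\cA}\otimes\cB$ are $Op_{\mfAlg}$-independent in $\cA\oprod\cB$ in the sense of Definitions \ref{def:subobj-indep} and \ref{def:opind}.

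I do not expect any substantive obstacle. The only non-formal ingredient is the existence of $\alpha_{\cA}\oprod\alpha_{\cB}$ as an operation on the minimal $C^*$-tensor product, and this is standard and already taken as a given in the regularity discussion. Beyond that the argument is purely diagrammatic: it specialises Corollary \ref{cor:tensor-sufficient} to the tensor category $(\mathfrak{Alg}, Op_{\mfAlg}, \oprod)$ and then invokes Proposition \ref{prop:one-morphism-enough} to pass from representative-independence to subobject-independence. Note in particular that one does not need to appeal to Proposition \ref{prop:TensorMorIndep}, which would require $\cA\oprod\cB$ to be $hom_{\mathfrak{Alg}}$-injective --- a strong and generally unavailable hypothesis --- since here the ambient object already coincides with the tensor product and no extension beyond it is needed.
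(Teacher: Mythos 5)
Your proof is correct and is in substance the paper's own argument: the paper likewise derives the proposition from the $hom_{\mathfrak{Alg}}$-regularity of $(\mathfrak{Alg}, Op_{\mfAlg}, \oprod)$ established in the preceding paragraph, the witness morphism is exactly $\alpha_{\cA}\oprod\alpha_{\cB}$, and the only analytic input is that the minimal tensor product of two operations is again an operation. The one point where you diverge is in which general lemma you invoke, and your choice is arguably the more careful one. The paper states that Proposition \ref{prop:TensorMorIndep} ``applies'' to yield Proposition \ref{prop:C*tensor-indep}, even though that proposition formally presupposes $\Hom$-injectivity of the ambient object --- a hypothesis that is neither available nor needed when the ambient object is $\cA\oprod\cB$ itself. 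Corollary \ref{cor:tensor-sufficient}, which you use instead, is what actually does the work here; Proposition \ref{prop:TensorMorIndep} is genuinely required only for Corollary \ref{cor:indep-tensor-inject}, where the operations must be extended beyond the tensor product into an injective \C algebra. Your explicit appeal to Proposition \ref{prop:one-morphism-enough} to pass from the canonical representative pair $(i_{\cA}, i_{\cB})$ to the full $hom_{\mathfrak{Alg}}$-subobjects is a step the paper leaves implicit, and it is the right way to close the gap between Corollary \ref{cor:tensor-sufficient} (a statement about one pair of monomorphisms) and Definitions \ref{def:subobj-indep} and \ref{def:opind} (statements about subobjects). No substantive gap remains.
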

As a corollary:
\begin{corollary}\label{cor:indep-tensor-inject}
If $\cC$ is an injective \C algebra and $\cA\otimes\cB$ is a \C subalgebra of $\cC$, then $\cA\thickapprox\cA\otimes I_{\cB}$ and $\cB\thickapprox I_{\cA}\otimes \cB$ are $Op_{\mfAlg}$-independent in $\cC$.
\end{corollary}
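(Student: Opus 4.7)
The plan is to recognize that the corollary follows directly by combining Proposition \ref{prop:TensorMorIndep} with Proposition \ref{prop:C*tensor-indep}, provided we identify the hypothesis that $\cC$ is an injective \C algebra with the categorical notion of $\Hom$-injectivity of $\cC$ in $(\mfAlg, Op_{\mfAlg})$. The paragraph preceding Proposition \ref{prop:C*tensor-indep} has already verified that $(\mfAlg, Op_{\mfAlg}, \otimes)$ is a $hom_{\mfAlg}$-regular tensorial category, so the tensor-product side of the hypotheses of Proposition \ref{prop:TensorMorIndep} is in place.

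First I would set up the $\Hom$-subobject structure on $\cC$. Since $\cA \otimes \cB$ is by assumption a \C subalgebra of $\cC$, there is an injective unital \C homomorphism $j \colon \cA\otimes\cB \hookrightarrow \cC$, i.e. a $hom_{\mfAlg}$-monomorphism. Composing with the canonical injections $i_{\cA}\colon \cA \to \cA\otimes\cB$ and $i_{\cB}\colon \cB \to \cA\otimes\cB$ yields $hom_{\mfAlg}$-monomorphisms $f_{\cA} = j \circ i_{\cA}$ and $f_{\cB} = j \circ i_{\cB}$ that represent $\cA$ and $\cB$ as $\Hom$-subobjects of $\cC$. The morphism $j$ itself witnesses that these subobjects are $(hom_{\mfAlg}, \otimes)$-independent in the sense of the corresponding definition: the required commutation $f_{\cA} = j \circ i_{\cA}$ and $f_{\cB} = j \circ i_{\cB}$ holds by construction.

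Next I would translate ``injective \C algebra'' into $\Hom$-injectivity. By the Arveson extension theorem, any completely positive map from a unital \C subalgebra into an injective \C algebra $\cC$ extends to a completely positive map on the ambient \C algebra; if the original map is unit-preserving then, since it already sends the common unit to $I_{\cC}$, the extension is automatically unit-preserving as well. Applied to the situation where $A$ embeds into $\cC$ via a $hom_{\mfAlg}$-monomorphism and carries an operation into $\cC$, this gives exactly the lifting property of Definition~\textbf{$\Hom$-injectivity}, and hence $\cC$ is $hom_{\mfAlg}$-injective.

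With both hypotheses now in hand, I would invoke Proposition \ref{prop:TensorMorIndep} directly: the $hom_{\mfAlg}$-subobjects represented by $f_{\cA}$ and $f_{\cB}$ are $(hom_{\mfAlg}, \otimes)$-independent inside the $hom_{\mfAlg}$-injective object $\cC$, so they are $Op_{\mfAlg}$-independent, which is exactly the statement of the corollary. The main (and only) conceptual obstacle is the identification of categorical $\Hom$-injectivity with the analytic notion of an injective \C algebra; once that translation is made via Arveson's theorem, the rest of the argument is mere bookkeeping atop the results already proved.
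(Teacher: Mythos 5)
Your proof is correct and takes essentially the route the paper intends: the corollary is meant to follow from Proposition \ref{prop:TensorMorIndep} by noting that the inclusion $j\colon \cA\otimes\cB\hookrightarrow\cC$ witnesses $(hom_{\mfAlg},\otimes)$-independence, that the category is $hom_{\mfAlg}$-regular, and that injectivity of $\cC$ (via Arveson's extension theorem) yields the categorial $\Hom$-injectivity needed. Your write-up merely makes explicit the translation steps the paper leaves implicit.
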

The joint extension to $\cA\otimes\cB$ of operations on $\cA$ and $\cB$ guaranteed by Proposition \ref{prop:C*tensor-indep} is just the tensor product of the two operations, which is again an operation \cite{Blackadar2005}[p. 190], (see also Proposition 9. in \cite{Redei-Summers2009}).
Note that \C algebras $\cA$, $\cB$ are not just $Op_{\mfAlg}$-independent in $\cA\otimes\cB$, they are $Op_{\mfAlg}$-independent in $\cA\otimes\cB$ \emph{in the product sense}: the tensor product of two operations factorizes over the components. $Op_{\mfAlg}$-independence in the product sense is a very strong independence property. It is known to be strictly stronger than $Op_{\mfAlg}$-independence simpliciter: $Op_{\mfAlg}$-independence in $\cC$ of commuting \C subalgebras $\cA$, $\cB$ of $\cC$ in the product sense is equivalent to \C independence of $\cA,\cB$ in the product sense (Proposition 10, \cite{Redei-Summers2009}) but \C independence of $\cA,\cB$ is strictly weaker than \C independence of $\cA,\cB$ in the product sense \cite{Summers1990b} (cf. Proposition  1. in \cite{Redei-Summers2009}).

The difference between $Op_{\mfAlg}$-independence and $Op_{\mfAlg}$-independence in the product sense, and the fact that the latter concept relies on the morphisms in $Op_{\mfAlg}$ being functions, lead to the question of whether there is a purely categorial version of subobject independence as morphism co-possibility ``in the product sense". We do not have such a concept and leave it is a problem for further investigation.

Note that the definition of $\Mor$-independence of $\Hom$-subobjects (Definition \ref{def:subobj-indep}) remains meaningful even if the class $\Hom$ is not a subclass of $\Mor$: As long as morphisms in $\Hom$ and $\Mor$ can be composed, one can meaningfully talk about $\Mor$-independence of $\Hom$-subobjects. This enables one to recover the major subsystem independence concepts that occur in algebraic quantum (field) theory by choosing special subclasses of the class of all non-selective operations $Op_{\mfAlg}$. For instance, taking states as a subclass of operations, one obtains \C independence; if algebras $\cA,\cB$ and $\cC$ are von Neumann algebras, taking \emph{normal} states as the subset of operations one obtains \W independence; taking \emph{normal} operations as subclass of operations, one obtains operational \W independence  (cf. \cite{Redei2010FoundPhys}). One also can define the product versions of these specific independence concepts by considering $Op_{\mfAlg}$-independence in the product sense with respect to the respective subclasses of operations. One has then the notions of \C and \W independence in the product sense, and operational \C and \W independence in the product sense. Specifications of further sub-types of independence obtains by considering particular operations such as conditional expectations, or Kraus operations (see \cite{Redei2010FoundPhys}). The logical relation of these independence concepts emerges then as a non-trivial problem, some of which are still open \cite{Redei2010FoundPhys}.  Viewed from the perspective of the resulting hierarchy of independence notions, $Op_{\mfAlg}$-independence serves as a general, categorial frame in which independence can be formulated and analyzed.

Given the concept of $Op_{\mfAlg}$-independence, it is natural to consider it as a possible condition to impose it on the covariant functor $\cF$ representing quantum field theory in order to express causal locality in terms of it. To do so we recall first the definition of the functor $\cF$ describing quantum field theory.

\section{$Op_{\mfAlg}$-independence as locality condition in categorial quantum field theory \label{sec:functor-of-QFT}}
The functor $\cF$ representing a general covariant quantum field theory is between two categories:
(i) $(\mathfrak{Man}, {hom}_{\mathfrak{Man}})$, the category of spacetimes with isometric embeddings of spacetimes as morphisms; and (ii)
$(\mathfrak{Alg}, hom_{\mathfrak{Alg}})$, the category of \C algebras  with injective \C algebra homomorphisms as morphisms.
The category  $(\mathfrak{Man}, {hom}_{\mathfrak{Man}})$ is specified by the following stipulations (see \cite{Brunetti-Fredenhagen-Verch2003} for more details):
\begin{itemize}\itemsep-1pt
\item[(i)] The objects in $Obj(\mathfrak{Man})$ are 4 dimensional $C^{\infty}$ spacetimes $(M,g)$ with a Lorentzian metric $g$ and such that
$(M,g)$ is Hausdorff, connected, time oriented and globally hyperbolic.
\item[(ii)] The morphisms in $hom_{\mfMan}$ are isometric smooth embeddings $\psi\colon (M_1,g_1)\to (M_2,g_2)$
that preserve the time orientation and are causal in the following sense:  if the endpoints $\gamma(a),\gamma(b)$ of a timelike curve
$\gamma \colon [a,b]\to M_2$ are in the image $\psi (M_1)$, then the whole curve is in the image:
$\gamma(t)\in\psi (M_1) \mbox{\ for all\ }  t\in [a,b]$.
The composition of morphisms is the usual composition of maps.
\end{itemize}

\begin{definition}\label{def:QFT-as-functor}
{\rm
A locally covariant quantum field theory is a functor $\cF$ between the categories $(\mathfrak{Man}$, ${hom}_{\mathfrak{Man}})$ and
$(\mathfrak{Alg}, hom_{\mathfrak{Alg}})$: For any object $(M,g)$ in $\mathfrak{Man}$ the $\cF(M,g)$ is a \C algebra in
$\mathfrak{Alg}$; for any homomorphism $\psi$ in ${hom}_{\mathfrak{Man}}$ the $\cF(\psi)$ is an injective \C algebra homomorphism in $hom_{\mathfrak{Alg}}$. The functor $\cF$ is required to have the properties 1.-4. below:
\begin{enumerate}
\item \textbf{Covariance}:
\begin{eqnarray*}
\cF(\psi_1\circ\psi_2)&=&\cF(\psi_1)\circ\cF(\psi_2)\\
\cF(id_{\mfMan})&=&id_{\mfAlg}
\end{eqnarray*}

\item \textbf{Einstein Causality}: Whenever the embeddings $\psi_1: (M_1,g_1)\to (M,g)$ and $\psi_2: (M_2,g_2)\to (M,g)$ are such that $\psi_1(M_1)$ and $\psi_2(M_2)$ are spacelike in $M$, then
\begin{equation}\label{commutator}
\Big\lbrack \cF(\psi_1)\Big(\cF(M_1,g_1)\Big),\cF(\psi_2)\Big(\cF(M_2,g_2)\Big)\Big\rbrack^{\cF(M,g)}_-=\{0\}
\end{equation}
where $\lbrack \ , \ \rbrack^{\cF(M,g)}_-$ in (\ref{commutator}) denotes the commutator in the \C algebra $\cF(M,g)$.

\item \textbf{Time slice axiom}:
If $(M,g)$ and $(M', g')$ and the embedding $\psi\colon (M,g)\to (M',g')$
are such that $\psi(M,g)$ contains a Cauchy surface for $(M',g')$ then
$$
\cF(\psi)\cF(M,g)=\cF(M',g')
$$

\item $Op_{\mfAlg}$-\textbf{independence}:
Whenever the embeddings
$\psi_1: (M_1,g_1)\to (M,g)$
and $\psi_2:  (M_2,g_2)\to (M,g)$ are such that $\psi_1(M_1)$ and $\psi_2(M_2)$ are spacelike in $M$, then the objects $\cF(M_1,g_1)$ and $\cF(M_2,g_2)$ are $Op_{\mfAlg}$-independent in $\cF(M,g)$ in the sense of Definition \ref{def:opind}.
\end{enumerate}
}
\endef
\end{definition}
The axiom system specified by Definition \ref{def:QFT-as-functor} differs from the one originally proposed in \cite{Brunetti-Fredenhagen-Verch2003} by the addition of the $Op_{\mfAlg}$-independence condition. Following the terminology introduced in \cite{Redei2016CatLocNagoya}, we call the original axiom system in \cite{Brunetti-Fredenhagen-Verch2003} \textbf{BASIC}, to distinguish it from the one given by Definition \ref{def:QFT-as-functor}, which we call \textbf{OPIND}. One also can strengthen \textbf{OPIND} by requiring in 4. in Definition \ref{def:QFT-as-functor} that the objects $\cF(M_1,g_1)$ and $\cF(M_2,g_2)$ are $Op_{\mfAlg}$-independent in $\cF(M,g)$ in the \emph{product sense}. We call the resulting axiom system $\textbf{OPIND}^{\times}$.

Other stipulations on the functor are also possible and have been formulated: The axiom system \textbf{BASIC} was amended by Brunetti and Fredenhagen by replacing the Einstein Causality condition by an axiom that requires a tensorial property of $\cF$ (Axiom 4 in \cite{Brunetti-Fredenhagen2009}; also see \cite{Fredenhagen-Rejzner2012}). To formulate this axiom one first extends $(\mathfrak{Man}, {hom}_{\mathfrak{Man}})$ to a tensor category $(\mathfrak{Man}^{\otimes}, {hom}^{\otimes}_{\mathfrak{Man}})$. This tensor category has, by definition, as its objects \emph{finite} disjoint unions of objects from $\mathfrak{Man}$, and the empty set as unit object. The morphisms $h^{\otimes}$ in ${hom}^{\otimes}_{\mathfrak{Man}}$ are embeddings of unions of disjoint spacetimes that are ${hom}_{\mathfrak{Man}}$-homomorphisms when restricted to the (connected) elements of the disjoint union of spacetimes and have the feature that the images under $h^{\otimes}$ of disjoint spacetimes are spacelike. The functor $\cF$ is then required to be extendable to a tensor functor $\cF^{\otimes}$ between $(\mathfrak{Man}^{\otimes}, {hom}^{\otimes}_{\mathfrak{Man}})$ in a natural way.  We call the resulting axiom system  \textbf{TENSOR}.

One obtains yet another axiom system if one requires a categorial version of the split property. This condition was formulated in \cite{Brunetti-Fredenhagen-Paniz-Reijzner2014} -- together with the categorial version of weak additivity. The definitions are:
\begin{definition}
{\rm
The functor $\cF$ has the categorial split property if the following two conditions hold:

\begin{enumerate}
\item For spacetimes $(M,g_M), (N,g_N)$ in $\mathfrak{Man}$ and morphism $\psi\colon (M,g_M) \to (N ,g_N)$ such that the closure of $\psi(M,g_M)$ is compact, connected and in the interior of $M$, there exists a type $I$ von Neumann factor $\cR$ such that
\begin{equation}
\cF(\psi)(\cF(M,g_M))\subset \cR\subset \cF(N,g_N)
\end{equation}
\item $\sigma$-continuity of the $\cF(\psi')$ with respect to the inclusion $\cR\subset \cR'$, where $\psi': (M,g_M)\to (L,g_L)$    and
\begin{eqnarray}
(\cF(\psi')\circ\cF(\psi))(\cF(M,g_M))&\subset& \cF(\psi')(\cR)\\
\subset \cF(\psi')(\cF(N,g_N))&\subset& \cR'\subset \cF(L,g_L)
\end{eqnarray}
\end{enumerate}
}
\endef
\end{definition}

\begin{definition}[weak additivity of the functor $\cF$]
{\rm
The functor $\cF$ satisfies weak additivity if for any spacetime $(M,g)$ and any family of spacetimes $(M_i,g_i)$ with morphisms $\psi_i\colon (M_i,g_i)\to (M,g)$ such that
\begin{equation}
M\subseteq \cup_i \psi_i(M_i)
\end{equation} we have
\begin{equation}
\cF(M,g)=\overline{\cup_i \cF(\psi_i)(\cF(M_i,g_i)))}^{norm}
\end{equation}
}
\endef
\end{definition}
We call \textbf{BASIC+SPLIT} the axiom system that requires of the covariant functor $\cF$ to have weak additivity and the categorial split property, in addition to Einstein Locality and Time Slice axiom.

As these different conditions imposed on the functor $\cF$ show, one can articulate the concept of physical locality understood as independence of the algebras of observables of spatio-temporaly local physical systems localized in  causally disjoint spacetime regions in more than one way. Thus the question or relation of the different axiom systems arise, and one also can ask: which one of the axiom systems is the most adequate.

The problem of the relation of the axiom  systems was raised in \cite{Redei2016CatLocNagoya}, where it was argued that the implications in the following diagram depicting the logical relations hold. Here we comment on the reverse of the indicated implications below.

\[\begin{array}{ccccccc}
	\textbf{TENSOR} &  \Leftarrow & \textbf{OPIND}^{\times} & \Rightarrow & \textbf{OPIND} & \Rightarrow &
	\textbf{BASIC} \\
	     \Updownarrow    & &                         & &                & & \\
	\textbf{BASIC + SPLIT} & &  & & & & \\
\end{array}\]


We have seen in Section \ref{sec:motiv} that \textbf{BASIC} does not entail \textbf{OPIND}.
The technical obstacle prohibiting the reverse of the implication $\textbf{OPIND}{}^{\times}$ $\Rightarrow$ \textbf{TENSOR} to hold trivially is that operations on \C subalgebras of a \C algebra $C$ need not be extendable to $C$. Hence, although \C subalgebras $\cA,\cB$ are $Op_{\mfAlg}$-independent in the tensor product $\cA\otimes\cB$, this does not entail without further conditions that $\cA,\cB$ are $Op_{\mfAlg}$-independent in a \C algebra $\cC$ containing $\cA\otimes\cB$ as a \C subalgebra. Injectivity of $\cC$ would entail this; however, it is not clear to us whether the  \C algebras $\cF(M,g)$ are injective in general -- or at least for some specific, typical spacetime regions such as double cones.

The reverse of the implication $\textbf{OPIND}{}^{\times}$  $\Rightarrow$ \textbf{OPIND} is unlikely to hold, given that operational \C independence in the product sense is a strictly stronger independence condition than operational \C independence -- but we do not have a rigorous proof of $\textbf{OPIND}$  $\not\Rightarrow$ $\textbf{OPIND}{}^{\times}$ in terms of a model of the axioms displaying the non-implication.

In view of the logical (in)dependencies of the axiom systems depicted in the chart, the conclusion we propose is that the most natural independence condition to stipulate to hold for the functor $\cF$ in order to express physical locality is $Op_{\mfAlg}$-independence. This condition has a very natural physical interpretation and it does not require more than what is contained in the notion of subsystem independence as co-possibility. So, if some physically relevant models existed which violate \textbf{TENSOR} but satisfy $\textbf{OPIND}$, that model would still be entirely acceptable from the perspective of a causal behavior of the quantum filed theory represented by the functor satisfying $\textbf{OPIND}$.

Our final remark concerns a possible characterization of spacelike separatedness of spacetime regions as subobject independence with respect to some embeddings of spacetimes as morphisms.\footnote{This problem was raised by K. Fredenhagen in the discussion after a talk based on this paper was delivered at the ``Local Quantum Physics and Beyond -- in Memoriam Rudolf Haag'', September 26-27, Hamburg, Germany.} Specifically, one would like to know if the causal embeddings defining the homomorphisms in the category $(\mathfrak{Man}, {hom}_{\mathfrak{Man}})$ have this feature. If indeed ${hom}_{\mathfrak{Man}}$-independence of spacetimes in the category $(\mathfrak{Man}, {hom}_{\mathfrak{Man}})$ (in the sense of Definition \ref{def:subobj-indep}) entails spacelike separetedness, then causal locality of the functor $\cF$ could be defined in a nice, compact manner as independence-faithfulness of the functor, where independence both in the domain and in the range of $\cF$ is captured completely by categorial subobject independence with respect to natural classes of morphisms.

\section*{Acknowledgement}
Research supported in part by the Hungarian Scientific Research Found (OTKA). Contract numbers: K 115593 and K 100715. Zal\'an Gyenis was partially supported by the Premium Postdoctoral Grant of the Hungarian Academy of Sciences.
\small
\bibliography{RedeiBib}

\end{document}